\begin{document}

\title{Optimal Grain Mixing is NP-Complete\\Tech Report \#MSU-NISL-21-001}

\author{\name Md Asaduzzaman Noor
    \email mdasaduzzaman.noor@student.montana.edu\\ 
    \addr Gianforte School of Computing\\
    Montana State University\\
    Bozeman, MT 59717
\AND 
        \name Sean Yaw %etc.
    \email sean.yaw@montana.edu\\
    \addr Gianforte School of Computing\\
    Montana State University\\
    Bozeman, MT 59717
\AND 
        \name Binhai Zhu %etc.
    \email bhz@montana.edu \\
    \addr Gianforte School of Computing\\
    Montana State University\\
    Bozeman, MT 59717
\AND
        \name John W. Sheppard 
    \email john.sheppard@montana.edu \\
    \addr Gianforte School of Computing\\
    Montana State University\\
    Bozeman, MT 59717
}

%\editor{John W. Sheppard}
\maketitle

\begin{abstract}%   <- trailing '%' for backward compatibility of .sty file
Protein content in wheat plays a significant role when determining the price of wheat production. The Grain mixing problem aims to find the optimal bin pair combination with an appropriate mixing ratio to load each truck that will yield a maximum profit when sold to a set of local grain elevators. In this paper, we presented two complexity proofs for the grain mixing problem and showed that finding the optimal solutions for the grain mixing problem remains hard. These proofs follow a reduction from the $3$-dimensional matching ($3$-DM) problem and a more restricted version of the $3$-DM known as planar $3$-DM problem respectively. The complexity proofs do suggest that the exact algorithm to find the optimal solution for the grain mixing problem may be infeasible. 
\end{abstract}

\begin{keywords}
  Grain Mixing, Precision Agriculture, Computational Complexity
\end{keywords}

\section{Introduction}
Generally, agriculture and agricultural products are essential in sustaining lives on the planet. Considerable planning is required in the agricultural sector To feed the large population on the Earth efficiently. In this paper, we considered an important component in the wheat distribution profit referred to as grain mixing (wheat blending). When selling wheat in a local grain elevator, many variables come into play for determining the price of the wheat. Among them, protein content plays the most important role which is affected by several environmental factors such as temperature, soil nitrogen level, precipitation, etc. Due to these variances, protein content in wheat changes not only from year to year but also from crop to crop. 

After the harvesting season, usually, the farmers store their grain into multiple grain bins and transport the grain via trucks in batches to sell wheat in the local grain elevators. Today, the device for tracking protein content in wheat is available, however, it is expensive. Therefore, small farmers take their harvest to the nearest elevator and collect the price paid by them. The grain mixing problem aims to determine the optimal mixing of different quality wheat (in terms of protein content) to load trucks that maximize the overall profit when selling wheat to multiple local grain elevators. 

This paper extends our previous work (\cite{noor_21}) where we applied and adapted two different evolutionary approaches: genetic algorithms and differential evolution in order to solve the grain mixing problem. The experimental results suggested that mixing grain increases the profitability when selling wheat, and the evolutionary approaches consistently led to higher overall profit. In this paper, we investigated the complexity proof for the grain mixing problem. We showed that the grain mixing problem is NP-Complete following a reduction from the ($3$-DM) and a planar ($3$-DM) problem respectively. Both of these problems are known to be NP-Complete (\cite{complexity_90, 3dm_npc}). The complexity proofs justify the use of approximation algorithms (such as evolutionary approaches) for getting a feasible solution. 

\section{Background}
There have been several approaches in the literature that attempt to solve the wheat blending problem (or blending problems in general). A few works in the literature attempted to solve the decision version of the wheat blending problem using linear programming (LP). \cite{blending_2001} utilized LP methods capable of predicting the optimal wheat blend ratio for a targeted final quality to produce a bread-making flour. \cite{blending_thesis} used the simplex algorithm to find the optimum blend that satisfies the customer's specific solvent retention capacities (SRC). However, for the grain mixing problem studied here, the protein cost function is non-linear, and there is no targeted wheat quality (the protein content of a truck is determined in runtime). Therefore, it can not be applied directly to the LP models. 

Mixed Integer Linear programming (MILP) is often used to solve real-world blending problems with problem-specific constraints. Although the MILP model can be used to get exact solutions, it is known to be NP-Hard (\cite{milp_hardness_88}). \cite{MILP_blending_shipping} proposed an MILP model to optimize the cost for the wheat supply chain (blending, loading, transportation, and storage), where the model used a specific blending formula for mixing. Meta-heuristic approaches are also a popular choice for solving blending optimization problems. \cite{blending_hybrid_evo} proposed a hybrid evolutionary method to solve the wheat blending problem in Australia. Their problem closely relates to ours, however, there are some additional constraints in our problems based on the US wheat market. The real-world blending problems in the literature do suggest that an exact solution may not be feasible in many cases. To the best of our knowledge, there is no complexity proof for the grain mixing (wheat blending) problem.

\section{Grain Mixing Problem}
\subsection{A Simple Example}
First, we will demonstrate how mixing grain can improve profitability when selling wheat through a simple example. We collected wheat harvesting data from a local Montana farmer who tracks the protein content of his wheat. In Montana, when selling wheat, the price per bushel of grain depends on a range of protein content. Each elevator has a base protein content range for which a base price is paid. Then the cost model follows a premium-dockage curve where if a tuck's protein content is higher than the base protein range a premium price is paid based on how far it is from the base level, and the price is docked if the truck's protein level is lower than the base protein range. In many cases, the protein content in a bin is short for reaching a higher price range. Therefore, mixing it with a high protein content bin when loading a truck might change the average protein content of the truck to reach a higher price grade which provides the scope for optimization. 

Figure \ref{fig:backgroud_grain_mix_example} shows an example of the grain mixing problem. There are three bins with different amounts of bushels and protein content in the example. Table \ref{tab:backgroud_protein_table} shows the example elevator prices where $[11, 12)\%$ shows the base protein range with $\$4$ price. The price is docked for a lower protein range and increased for a higher protein range. Farmers who do not track the protein content of wheat would load trucks separately with grain from each bin and the price they would get would be $(50 * \$3 + 100 * \$4 + 50 * \$6) = \$850$. There is a fixed capacity for the number of bushels a truck can carry which is $100$ in the example. However, if they were to track the protein content and mix grains as shown in the figure; load truck one with $50$ bushels from bin one and $50$ bushels from bin two, and load truck two with $50$ bushels from bin two and $50$ bushels from bin three, the price they will get would be $(100 * \$4 + 100 * \$6) = \$1000$. Therefore, mixing grain in this scenario increases the profit by $\$150$. 

\begin{figure}[t!]
    \centering
    \includegraphics[width=0.6\linewidth]{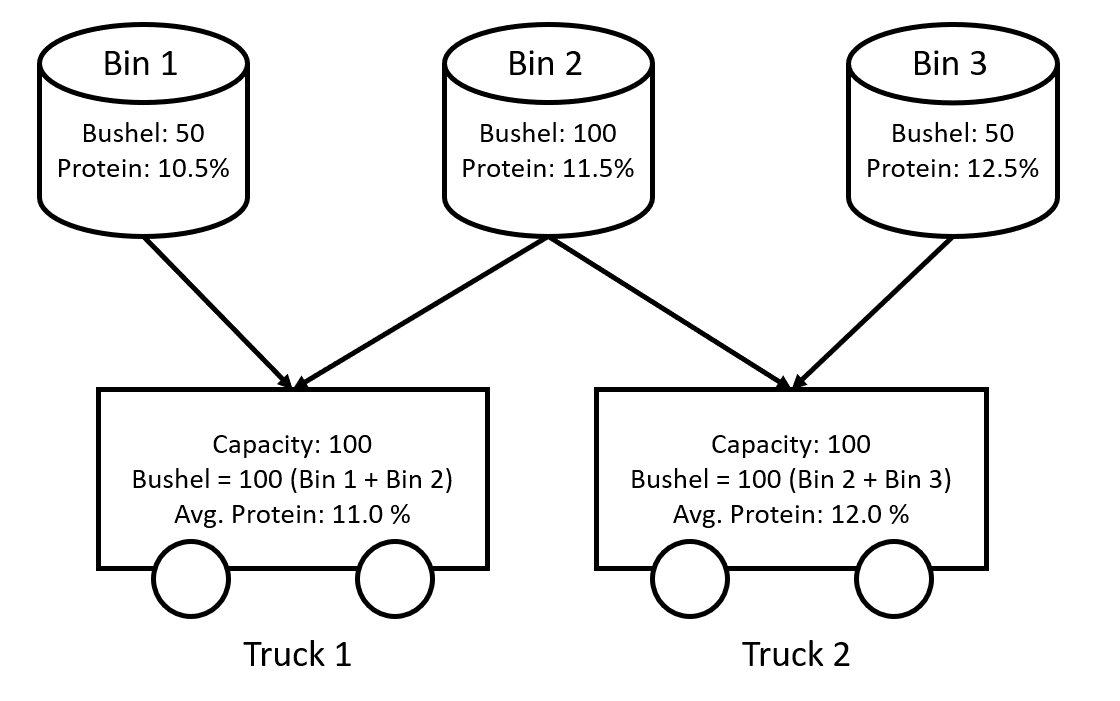}
    \caption{A simple grain mixing example}
    \label{fig:backgroud_grain_mix_example}
\end{figure}
\begin{table}[t!]
\centering
    \caption{Elevator price for grain mixing example}
    \begin{tabular}{|c|c|}
    \hline
    \textbf{Protein range (\%)} & \textbf{Price/Bushel (\$)} \\ \hline
    $[10, 11)$ & $3$ \\ \hline
    $[11, 12)$ & $4$ \\ \hline
    $[12, 13)$ & $6$ \\ \hline
    \end{tabular}
    \label{tab:backgroud_protein_table}
\end{table}

Besides protein content, two more variables need to be considered in the cost model. There is a mixing cost associated when mixing grain to load a truck. The mixing cost depends on the distance between the two bins that were used to load a truck. The mixing cost is higher when mixing grain from two bins that are farther apart. Finally, there is a delivery cost associated with the distance between the truck site and the elevators. 

A key challenge in the grain mixing problem is to find the optimal bin pair combination and the number of bushels drawn from each bin to load tucks that will yield maximum profit. In this paper, we showed that finding the optimal bin pair combination with the mixing ratio is an NP-Complete problem. 

\subsection{Problem Definition}
In order to derive the complexity proofs from the $3$-DM problems, we had to assume that each elevator has a fixed amount of bushels that they will accept. Although this assumption is not present in the general grain mixing problem, the proof provides us an idea of the hardness of the grain mixing problem studied here. Therefore, the grain mixing problem can be defined as follows:

\noindent Input:
\begin{itemize}
    \item $B$: Set of bins with each having a capacity, protein content (\%), and elevator-specific delivery cost. Each pair of bins also has a mixing cost.
    \item $R$: Set of trucks with each having a capacity.
    \item $M$: Set of elevators with each having a capacity and protein content cost function (i.e., \$ paid per unit with protein content $p$).
\end{itemize}

The goal of the Grain Mixing ($GM$) problem is to find an assignment of bins to trucks to elevators and quantity of grain to transfer such that:
\begin{enumerate}
    \item No bin, truck, or elevator capacities are violated.
    \item The total profit is maximized.
\end{enumerate}

\section{Complexity Proofs}
The complexity proofs for the grain mixing (GM) problem follow from a reduction from the standard $3$-DM problem and a more restricted version of the $3$-DM problem known as planar $3$-DM problem respectively. The standard $3$-DM problem can be defined as follows:

\begin{definition}[$3$-DM]
Let $X$, $Y$, $Z$ be finite sets with $|X| = |Y| = |Z| = \alpha$, and let $T \subseteq X \times Y \times Z$ consist of triples such that $x \in X$, $y \in Y$, and $z \in Z$. $M \subseteq T$ with $|M| = \alpha$ is a valid 3-Dimensional Matching (3-DM) if for any two distinct triples $(x_1, y_1, z_1) \in M$, and $(x_2, y_2, z_2) \in M$, the following holds: $x_1 \neq x_2$, $y_1 \neq y_2$, and $z_1 \neq z_2$.
\end{definition}

Similarly, from the standard $3$-DM problem, the restricted planar $3$-DM can be defined as follows:
\begin{definition}[Planar $3$-DM]
Let us consider a standard $3$-DM problem. A bipartite graph can be associated with the 3-DM instance where one side consists of vertices from $X$, $Y$, and $Z$ and the other side consists of vertices from triples $T$. There is an edge between the vertices from the sets to the vertices of the triples if and only if the set element is in the triple. If the associated graph is planar then it is referred to as planar 3-DM.
\end{definition}

The decision version of the (planar) $3$-DM problem is known to be NP-complete as shown in one of Karp's NP-Complete problem lists (\cite{Karp1972}). The decision problem can be defined as \textit{``Given a subset of triples $T$ and an integer $\alpha$, does there exist a $3$-dimensional matching $M \subseteq T$ where $|M| \geq \alpha$''}. Consequently, the optimization problem for the $3$DM can be defined as \textit{``Given a subset of triples $T$, find the $3$-dimensional matching $M \subseteq T$  that maximizes $|M|$''}. As the decision problem is NP-Complete, that follows that the optimization problem will be NP-Hard.

\begin{theorem}
\label{thm:nphard}
The grain mixing problem is NP-Complete.
\end{theorem}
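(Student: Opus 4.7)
The plan is to prove NP-completeness in the standard two-step way: membership in NP, then a polynomial-time reduction from 3-DM. For membership, I would work with the decision version ``does there exist an assignment of profit at least $W$?''. A certificate specifies the quantity shipped along every (bin, truck, elevator) route; because the objective is piecewise linear in these quantities subject to linear capacity constraints, an optimum is attained at a polytope vertex, so the certificate has polynomially bounded bit length, and a verifier checks the capacity inequalities and evaluates the profit in polynomial time.

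For the hardness direction I would reduce from 3-DM. Given an instance $(X, Y, Z, T)$ with $|X|=|Y|=|Z|=\alpha$, I would construct a GM instance with one bin $B_u$ of capacity $1$ for every $u \in X \cup Y$, assigning each bin a distinct protein content $\pi(u)$ drawn from a numerically generic sequence (for example $\pi(u) = N^{\mathrm{index}(u)}$ for a sufficiently large integer $N$ of polynomial bit length); exactly $\alpha$ trucks of capacity $2$; and one elevator $E_z$ of capacity $2$ for each $z \in Z$, whose price function pays a fixed price $P$ per bushel if the incoming protein equals $(\pi(x) + \pi(y))/2$ for some $(x, y, z) \in T$ and pays $0$ otherwise. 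Set the target profit to $W = 2\alpha P$. Given a 3-DM $M$ of size $\alpha$, for each $(x, y, z) \in M$ allocate one truck with one bushel each of $B_x$ and $B_y$ delivered to $E_z$; the truck protein is exactly $(\pi(x)+\pi(y))/2$, the matching property of $M$ guarantees that no capacity is violated, and the total profit equals $2\alpha P = W$. Conversely, any solution of profit at least $W$ must ship $2\alpha$ bushels all priced at $P$, which forces every truck to be loaded to its full $2$ bushels and every elevator to be saturated; the generic choice of labels then forces each truck to consist of one bushel of $B_x$ and one bushel of $B_y$ with $(x, y, z) \in T$ delivered to $E_z$, and unit bin capacities together with one truck per elevator impose coordinate-disjointness in $X$, $Y$, and $Z$, producing a 3-DM of size $\alpha$.

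The main obstacle I anticipate is making the ``general position'' argument fully rigorous: I need to rule out that some partial or multi-bin loading whose total is $2$ bushels could produce a weighted-average protein coinciding with an entry in some elevator's price table. The cleanest remedy is to choose $\pi$ so that every nonnegative rational combination $\sum_i c_i \pi(a_i)$ with $\sum_i c_i = 2$ uniquely determines the multiset $\{(c_i, a_i)\}$, which is achieved by using labels that grow fast enough relative to the coefficients that can arise (powers of a large $N$ suffice, and $N$ can be taken of polynomial bit length in the input). I would also reinforce the construction with prohibitive mixing costs for any same-side pair so that same-$X$ or same-$Y$ blends are excluded a priori. Once this arithmetic bookkeeping is pinned down, both directions of the equivalence follow directly from the combinatorial structure of 3-DM.
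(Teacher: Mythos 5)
Your proposal is essentially correct but takes a genuinely different route from the paper. The paper also reduces from $3$-DM (it gives two reductions, one from planar $3$-DM and one from standard $3$-DM), and its standard reduction likewise uses unique pairwise-average protein contents as the ``signature'' an elevator recognizes. The key divergence is how the ``only if'' direction rules out cheating. The paper allows partial truck loads to be potentially revenue-positive and therefore must introduce a per-bin delivery cost $\omega=2\delta/\beta\ge 2$ calibrated against the minimum protein gaps $\beta$ and $\delta$, then carry out a lengthy case analysis bounding the quantity $q+\tfrac12\le 1/(1+\omega)$ that a wrong bin pair can deliver at a target average, followed by a separate argument about ``submaximal elevators'' sharing bins. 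Your construction avoids all of this by tight capacity accounting: total bin supply, total truck capacity, total elevator capacity, and the revenue target $W=2\alpha P$ all coincide at $2\alpha$ bushels, so every truck must be exactly full and every bushel priced, which (together with the infinite same-side mixing costs limiting each truck to one $X$-bin and one $Y$-bin, each of capacity $1$) forces the load vector to be exactly $(1,1)$ and reduces the whole arithmetic to distinctness of pair sums. This is cleaner and shorter. Your explicit NP-membership argument via polytope vertices is also more careful than the paper's one-line ``GM is obviously in NP.''

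One point needs repair, though it does not sink the proof. The ``cleanest remedy'' you state --- choosing $\pi$ so that \emph{every} nonnegative rational combination $\sum_i c_i\pi(a_i)$ with $\sum_i c_i=2$ uniquely determines the multiset $\{(c_i,a_i)\}$ --- is false for any choice of labels once there are at least three bins, because the coefficients range over a continuum: e.g.\ with labels $N,N^2,N^3$ one has $1\cdot N+1\cdot N^2=(1+tN)\,N+\bigl(1-t(N+1)\bigr)N^2+t\,N^3$ with all three coefficients nonnegative and summing to $2$ for small $t>0$. Powers of $N$ give you distinctness of \emph{pair sums} $N^i+N^j$ over unordered pairs, which is all you actually need --- but only because the infinite same-side mixing costs and the forced full loads have already reduced the problem to two coefficients each equal to $1$. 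So in the write-up you should drop the general uniqueness claim and make the chain of forcing explicit: (i) profit $W$ forces $2\alpha$ bushels shipped and all priced, hence every truck exactly full; (ii) same-side mixing costs plus unit bin capacities force each truck to carry exactly one bushel from one $X$-bin and one from one $Y$-bin; (iii) distinct pair sums then identify the pair; (iv) unit bin capacity and the truck--elevator bijection give coordinate-disjointness. With that restructuring the argument is complete.
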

\begin{proof}
We provide two proofs for the NP-Completeness of the GM problem. The first one follows a reduction from the planar $3$-DM problem. First, given an instance of planar 3-DM problem, we convert it to an instance of GM problem as follows:
\begin{itemize}
    \item For each element in set $X$, and set $Y$:
    \begin{itemize}
        \item Create Bin $b_{x_i}$ and Bin $b_{y_z}$ and assign half unit of bushels,
        \item Set protein content for $b_{x_i}$ to $(p-\epsilon) \%$ and for $b_{y_z}$ to $(p+\epsilon) \%$,
        \item Set initial mixing cost to $+\infty$ for all of the created bins.
    \end{itemize}
    \item For each element in set $Z$, create an elevator $m_{z_k}$ and define its cost function to pay $\$0$ initially. Set the capacity of the elevators to accept a maximum of one unit bushel.
    
    \item For each triples $(x_i, y_j, z_k) \in T$, create a truck $t_i$ that loads grain from $b_{x_i}$ and $b_{y_z}$ and deliver to elevator $m_{z_k}$. The maximum truck capacity is also one unit bushel.
    \begin{itemize}
        \item change the mixing cost between $b_{x_i}$ and $b_{y_z}$ to $C$,
        \item Set the delivery cost of truck $t_i$ to $C$,
        \item Update elevator $m_{z_k}$ cost function to pay $\$R$ if the protein range of the truck is in $\left[p, p+2\epsilon\right)$.
    \end{itemize}
    
\end{itemize}

Figure \ref{fig:gm_proof} shows the reduction from the planar 3-DM to the GM instance. In the figure, the triples represented in the red box could be considered as an intermediate warehouse that is connected to the elevator by an edge with cost $+\infty$ that is used to prevent flowing trucks/bushels from one elevator to another (as it is not possible for the GM scenario). The triples vertex with the associated set vertex can be thought of as a truck starting at $b_{x_i}$ location and going to $b_{y_j}$ location for mixing grain incurring mixing cost $C$ and from there going to the warehouse with a delivery cost of $C$. These costs are represented with the edge distance. We then prove the following claim:  {\it The planar 3-DM problem has a matching of size $\alpha$ if and only if the GM instance has a maximum revenue of $\$R\times \alpha$ and a minimum cost of $2C \times \alpha$.}

\begin{figure}[t!]
    \centering
    \includegraphics[width=0.8\linewidth]{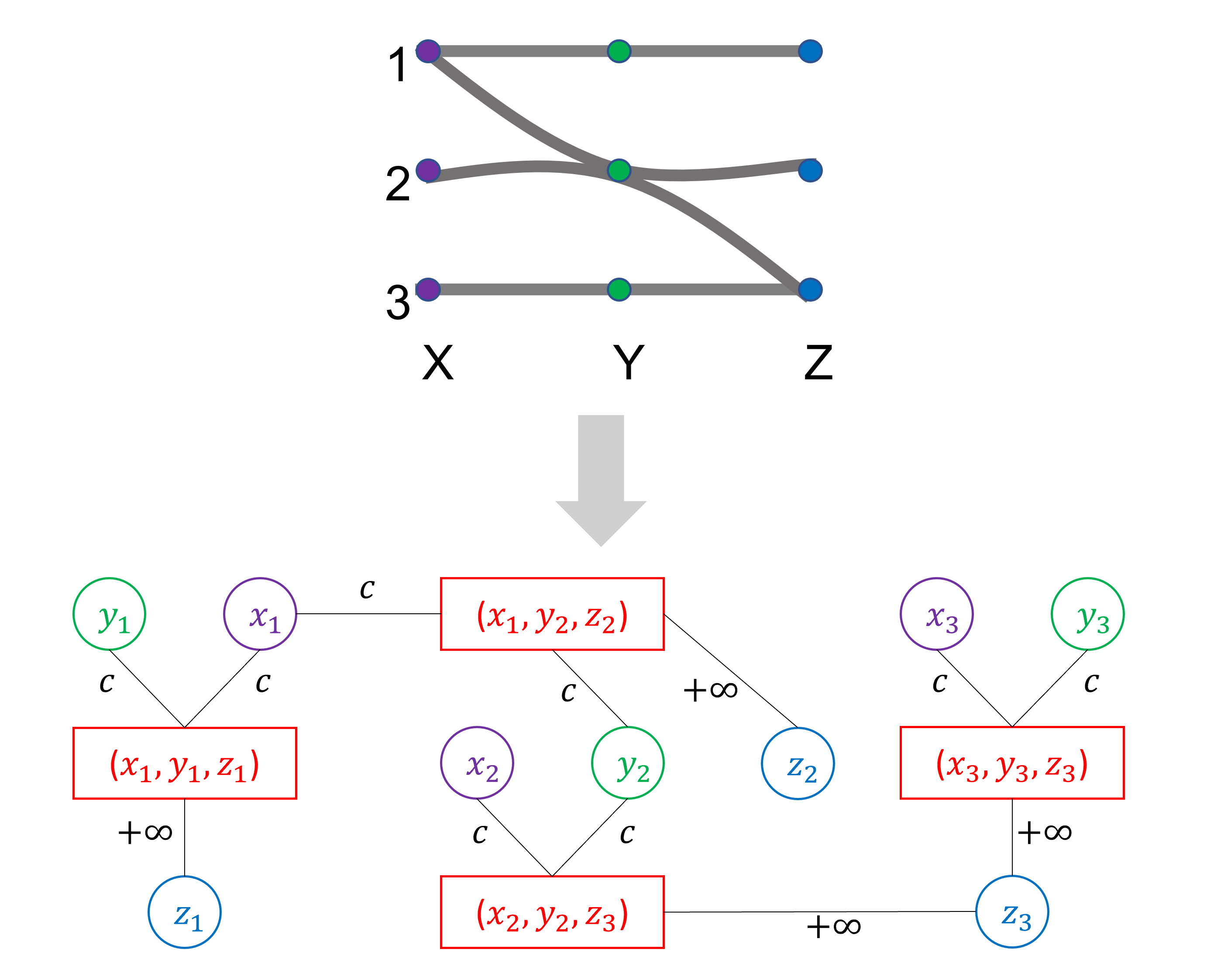}
    \caption{Reduction from planar 3-DM to GM instance}
    \label{fig:gm_proof}
\end{figure}

\vspace{0.2in}
\noindent
{\bf If part: }{\it If the planar 3-DM has a matching of size $\alpha$ then the GM instance a maximum revenue of $\$R\alpha$ and a minimum cost of $2C\alpha$.}

For each matching in triples $(x_i, y_j, z_k) \in \alpha$, we load trucks with half unit bushels from bin $b_{x_i}$, and half unit from bin $b_{y_j}$ and send to elevator $m_{z_k}$. The elevator will pay $\$R$ as the protein range of the truck would be $\left[p, p+2\epsilon\right)$ and the total mixing cost and delivery cost would be $2C$. Therefore, for $\alpha$ matching, the total revenue would be $\$R\alpha$ and the total cost would be $2C\alpha$.

\vspace{0.2in}
\noindent
{\bf Only If part: }{\it If the GM instance a maximum revenue of $\$R\alpha$ and a minimum cost of $2C\alpha$ then the planar 3-DM has a matching of size $\alpha$.}

First, we argue that no two bins from the same set $X$ or $Y$ will be in the valid GM solution as the mixing cost between the same set of bins is set to \textit{infinity}. Therefore, there will be no profit from mixing from the same set of bins. Then, the same bin will not appear twice in the valid GM solution as it will either increase the overall mixing cost or decrease the overall revenue. For example, each bin has a capacity to hold half unit bushels and if a bin is delivered to multiple elevators the truck will be partially filled. For a partially filled truck the revenue would be $<\$R$ and there will be multiple mixing costs for the same bin which would be $>C$. Therefore, using the same bin to deliver grain to multiple elevators would either decrease the overall revenue or increase the overall cost. 

Finally, the elevator can accept only one unit of bushels. Therefore, if it accepts a fully loaded truck (truck capacity is also one unit bushels) then it will not accept bushels from any other truck. However, two partially loaded trucks may deliver to the same elevator which will violate it to be a valid matching. We argue that partially filled trucks would not be in the valid GM solution as in that case it will either increase the overall delivery cost or decrease the overall revenue. For example, if a partially loaded truck delivers to an elevator, the elevator will pay $<\$R$ amount and in that case, another partially filled truck has to deliver to the same elevator to get an overall revenue of $\$R$. However, in this case, multiple trucks have to go to the same elevator which will increase the delivery cost to be $>C$. Therefore, each truck in the valid GM solution will have a unique bin pair and an elevator providing a planar 3-DM of size $\alpha$. As GM is obviously in NP, we have the theorem.
\end{proof}

\begin{proof}
The second proof follows a reduction from the standard $3$-DM problem. Given an instance of the $3$-DM problem, we turn it into an instance of the $GM$ problem as follows:
\begin{enumerate}
    \item Create bin $b_{x_i}$ for each $x_i \in X$ and create bin $b_{y_j}$ for each $y_j \in Y$. Give all bins the capacity of one half unit.
    \item For each bin $b_a$, give it a unique protein content $p_a$ between $0$ and $1$ such that the average protein content of each pair of bins is unique. 
    This can be done by keeping a list of disallowed values including already assigned protein values, as well as protein values that when averaged to an already assigned value will equal some pair's average value.
    To select a new protein content value, a random value that is not on the disallowed value list is generated.
    \item Set all mixing costs initially to infinity.
    \item For each $z_k \in Z$, create the elevator $m_{z_k}$ and give it a capacity of one unit. Set all elevator cost functions initially to zero (i.e., elevators pay \$$0$ per unit, regardless of protein content).
    \item Create $|T|$ trucks, each having a capacity of one unit.%, unit price paid of \$$3$/unit, and a delivery cost of \$$2$.
    \item Assign the following values:
    \begin{itemize}
        \item $\beta = \min \{|p_a - p_b|: a,b \text{ are different bins}\}$. $\beta > 0$ since the protein content of each bin is unique.
        \item $\delta = \min \{|\frac{p_a + p_b}{2} - \frac{p_c + p_d}{2}|: \text{ bins } \{a,b\} \neq \{c,d\} \}$. $\delta > 0$ since the average protein content for each pair of bins is unique.
        \item $\omega = \frac{2 \delta}{\beta} >= 2$ since $\delta=|\frac{p_a + p_b}{2} - \frac{p_c + p_d}{2}| = |\frac{p_a - p_c + p_b - p_d}{2}| \ge \frac{\beta + \beta}{2} = \beta$ implies $\frac{2 \delta}{\beta} \ge \frac{2 \delta}{\delta} = 2$.
    \end{itemize}
    \item Set delivery cost for each bin to $\omega$.
    \item For each $t = (x_i,y_j,z_k) \in T$:
    \begin{enumerate}
        \item Set mixing cost as $mix(b_{x_i}, b_{y_j})=0$.
        %\item Set delivery cost as $del(b_{y_j}, m_{z_k})=1$.
        \item Modify $m_{z_k}$'s cost function to pay $(1+\omega)$ per unit for protein content of $\frac{p_{x_i} + p_{y_j}}{2}$, in addition to whatever the cost function already was.
    \end{enumerate}
\end{enumerate}
Figure~\ref{fig:reduct} shows the reduction from $3$-DM to $GM$. We then proof the following claim:  {\it The planar 3-DM problem has a matching of size $\alpha$ if and only if the GM instance has a profit of $\$\alpha$.}
\begin{figure}
\centering
\includegraphics[width = 0.9\linewidth]{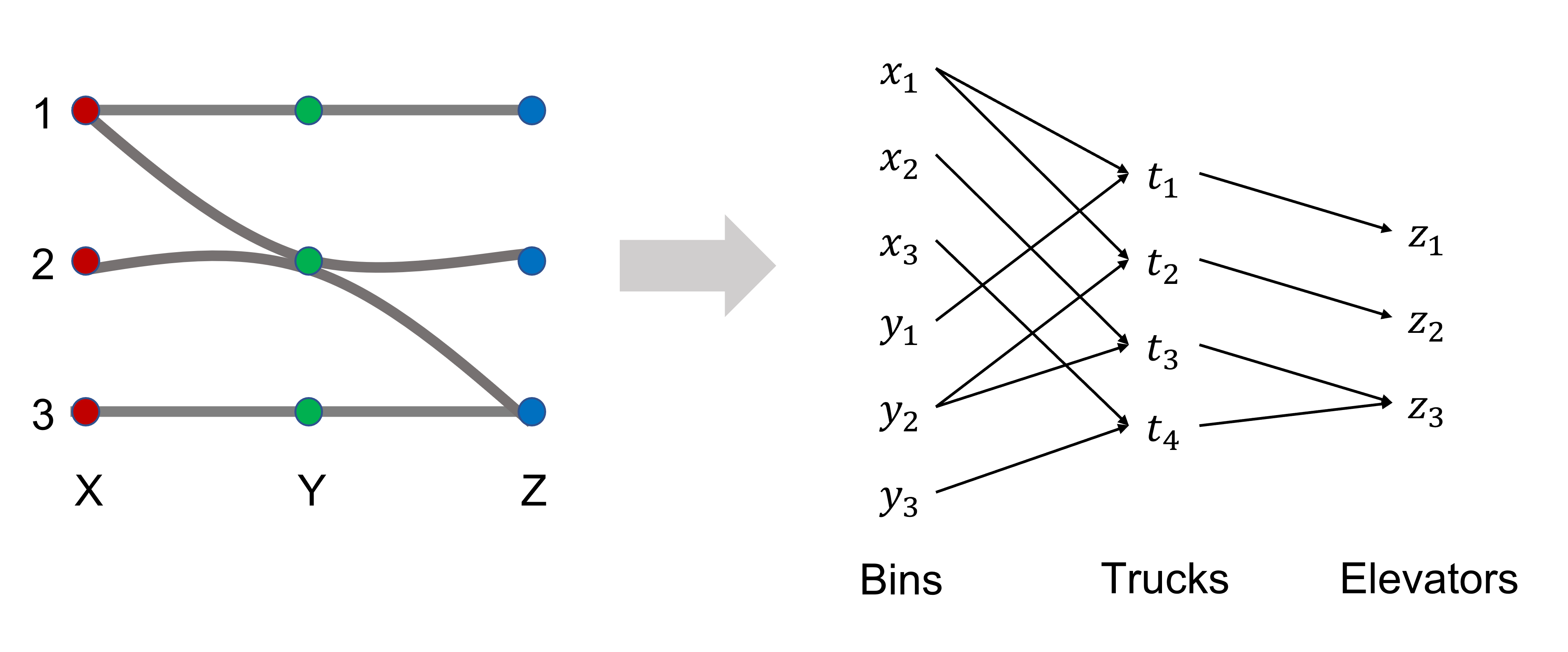}
\caption{Reduction from $3$-DM to the GM instance.}
\label{fig:reduct}
\end{figure}

\vspace{0.2in}
\noindent
{\bf If part: } Suppose the $3$-DM instance has a matching $M$ of size $\alpha$. 
For each $t = (x_i, y_j, z_k)$ in $M$, put one half unit from bin $b_{x_i}$ and one half unit from bin $b_{y_j}$ on truck $t$ and deliver to elevator $m_{z_k}$. 
Since each $x_i$, $y_j$, and $z_k$ can occur at most once each in $M$, the capacity constraints of bins $b_{x_i}$ and $b_{y_j}$, truck $r_{z_k}$, and elevator $m_{z_k}$ will not be violated. 
Also, since the protein content of truck $t$ is $(p_{x_i} + p_{y_j})/2$, elevator $m_{z_k}$ will pay $(1 + \omega)$ for its one unit. 
The profit of this assignment is $\sum_{t \in M}(1 + \omega - \omega)=\sum_{t \in M}(1)=\alpha$.

\vspace{0.2in}
\noindent
{\bf Only If part: } Suppose the $GM$ instance has an assignment whose profit is $\alpha$. 
Consider a elevator $m_{z_k}$ with profit larger than zero.
This elevator must have received material with a protein content matching some entry of its cost function.
In other words, the protein content of its received material must equal $(p_{x_i} + p_{y_j})/2$ for some $(x_i,y_j,z_k) \in T$. 
We will show that it is not possible for any bins other than $b_{x_i}$ and $b_{y_j}$ to have provided this material to $m_{z_k}$ at a profit.
Suppose that this material came from bins $b'_{x_i}$ and $b'_{y_j}$ instead of from $b_{x_i}$ and $b_{y_j}$.
We first aim to quantify the maximum amount of material that could be on the truck from $b'_{x_i}$ and $b'_{y_j}$ that achieved the protein content of $(p_{x_i} + p_{y_j})/2$.
Since the average protein content for each pair of bins is unique, this leads \hbox{to two cases:}

\underline{Case 1}: $\frac{p_{x_i} + p_{y_j}}{2} < \frac{p'_{x_i} + p'_{y_j}}{2}$

\noindent The maximum amount of material from bins $b'_{x_i}$ and $b'_{y_j}$ that yields a protein content of $(p_{x_i} + p_{y_j})/2$ must include the full half unit capacity of the bin with the lower protein content, say bin $b'_{x_i}$. Otherwise, the amount of additional material needed from bin $b'_{y_j}$ to maintain the weighted average protein content is less than the amount of material from left in bin $b'_{x_i}$.
The amount of material needed from bin $b'_{y_j}$ to achieve a protein content of $(p_{x_i} + p_{y_j})/2$ is represented as $q$ in the equation,
\begin{equation}
    \frac{\frac{1}{2} p'_{x_i} + q p'_{y_j}}{\frac{1}{2} + q} = \frac{p_{x_i} + p_{y_j}}{2}
\end{equation}
Furthermore, since the difference of average protein content between pairs of bins is at least $\delta$,
\begin{equation}
    \frac{p'_{x_i} + p'_{y_j}}{2} - \frac{p_{x_i} + p_{y_j}}{2} \ge \delta
\end{equation}
Combining these equations gives the bound of,
\begin{equation}
    \delta \le \frac{p'_{x_i} + p'_{y_j}}{2} - \frac{\frac{1}{2} p'_{x_i} + q p'_{y_j}}{\frac{1}{2} + q}
\end{equation}
Solving for $q$ gives the bound of,
\begin{align}
    \frac{\frac{1}{2} p'_{x_i} + q p'_{y_j}}{\frac{1}{2} + q} &\le \frac{p'_{x_i} + p'_{y_j}}{2} - \delta\\
    \frac{1}{2} p'_{x_i} + q p'_{y_j} &\le \frac{1}{2}\left(\frac{p'_{x_i} + p'_{y_j}}{2} - \delta \right) + q\left(\frac{p'_{x_i} + p'_{y_j}}{2} - \delta \right)\\
    q p'_{y_j} - q\left(\frac{p'_{x_i} + p'_{y_j}}{2} - \delta \right) &\le \frac{1}{2}\left(\frac{p'_{x_i} + p'_{y_j}}{2} - \delta - p'_{x_i}\right)\\
    q\left(\frac{p'_{y_j} - p'_{x_i}}{2} + \delta \right) &\le \frac{1}{2}\left(\frac{p'_{y_j} - p'_{x_i} }{2} - \delta \right)\\
    q&\le \frac{\frac{p'_{y_j} - p'_{x_i} }{2} - \delta}{2 \left(\frac{p'_{y_j} - p'_{x_i}}{2} + \delta \right)}
\end{align}
The total amount of material on the truck delivered to elevator $m_{z_k}$ can be calculated as,
\begin{align}
    q + \frac{1}{2} &\le \frac{\frac{p'_{y_j} - p'_{x_i} }{2} - \delta}{2 \left(\frac{p'_{y_j} - p'_{x_i}}{2} + \delta \right)} + \frac{1}{2}\\
    &= \frac{\frac{p'_{y_j} - p'_{x_i} }{2} - \delta}{2 \left(\frac{p'_{y_j} - p'_{x_i}}{2} + \delta \right)} + \frac{\frac{p'_{y_j} - p'_{x_i}}{2} + \delta}{2 \left(\frac{p'_{y_j} - p'_{x_i}}{2} + \delta \right)}\\
    &= \frac{p'_{y_j} - p'_{x_i}}{p'_{y_j} - p'_{x_i} + 2\delta}\\
    &= \left( \frac{p'_{y_j} - p'_{x_i} + 2\delta}{p'_{y_j} - p'_{x_i}}\right)^{-1}\\
    &= \left( 1 + \frac{2\delta}{p'_{y_j} - p'_{x_i}}\right)^{-1}\\ 
    &\le \left( 1 + \frac{2\delta}{\beta}\right)^{-1}\\
    &= \left( 1 + \omega\right)^{-1} = \frac{1}{1+\omega}
\end{align}
So, the truck delivers at most $1/(1+\omega)$ units to elevator $m_{z_k}$.
Since the elevator will pay $(1 + \omega)$ per unit and the delivery cost is $\omega$, the profit incurred is at most,
\begin{equation}
    \frac{1}{1+\omega} (1 + \omega) - \omega = 1 - \omega <= -1, \text{ since $\omega \ge 2$.}
\end{equation}
Thus, $m_{z_k}$ cannot profitably receive material with protein content $(p_{x_i} + p_{y_j})/2$ from any bins other than $b_{x_i}$ and $b_{y_j}$, which corresponds to an entry $(x_i, y_j, z_k)$ in $M$.

\underline{Case 2}: $\frac{p_{x_i} + p_{y_j}}{2} > \frac{p'_{x_i} + p'_{y_j}}{2}$

\noindent The maximum amount of material from bins $b'_{x_i}$ and $b'_{y_j}$ that yields a protein content of $(p_{x_i} + p_{y_j})/2$ must include the full half unit capacity of the bin with the higher protein content, say bin $b'_{y_j}$.
The amount of material needed from bin $b'_{x_i}$ to achieve a protein content of $(p_{x_i} + p_{y_j})/2$ is represented as $q$ in the equation,
\begin{equation}
    \frac{q p'_{x_i} + \frac{1}{2} p'_{y_j}}{\frac{1}{2} + q} = \frac{p_{x_i} + p_{y_j}}{2}
\end{equation}
Furthermore, since the difference of average protein content between pairs of bins is at least $\delta$,
\begin{equation}
    \frac{p_{x_i} + p_{y_j}}{2} - \frac{p'_{x_i} + p'_{y_j}}{2} \ge \delta
\end{equation}
Combining these equations gives the bound of,
\begin{equation}
    \delta \le \frac{q p'_{x_i} + \frac{1}{2} p'_{y_j}}{\frac{1}{2} + q} - \frac{p'_{x_i} + p'_{y_j}}{2}
\end{equation}
Solving for $q$ gives the bound of,
\begin{align}
    \delta + \frac{p'_{x_i} + p'_{y_j}}{2} &\le \frac{q p'_{x_i} + \frac{1}{2} p'_{y_j}}{\frac{1}{2} + q}\\
    \frac{1}{2}\left(\delta + \frac{p'_{x_i} + p'_{y_j}}{2}\right) + q\left(\delta + \frac{p'_{x_i} + p'_{y_j}}{2}\right) &\le q p'_{x_i} + \frac{1}{2} p'_{y_j}\\
    q\left(\delta + \frac{p'_{x_i} + p'_{y_j}}{2} - p'_{x_i}\right) &\le \frac{1}{2}\left(p'_{y_j} - \delta - \frac{p'_{x_i} + p'_{y_j}}{2}\right)\\
    q&\le \frac{p'_{y_j} - \delta - \frac{p'_{x_i} + p'_{y_j} }{2}}{2 \left(\delta + \frac{p'_{x_i} + p'_{y_j}}{2} - p'_{x_i}\right)}
\end{align}
The total amount of material on the truck delivered to elevator $m_{z_k}$ can be calculated as,
\begin{align}
    q + \frac{1}{2} &\le \frac{p'_{y_j} - \delta - \frac{p'_{x_i} + p'_{y_j} }{2}}{2 \left(\delta + \frac{p'_{x_i} + p'_{y_j}}{2} - p'_{x_i}\right)} + \frac{1}{2}\\
    &= \frac{p'_{y_j} - \delta - \frac{p'_{x_i} + p'_{y_j} }{2}}{2 \left(\delta + \frac{p'_{x_i} + p'_{y_j}}{2} - p'_{x_i}\right)} + \frac{\delta + \frac{p'_{x_i} + p'_{y_j}}{2} - p'_{x_i}}{2 \left(\delta + \frac{p'_{x_i} + p'_{y_j}}{2} - p'_{x_i}\right)}\\
    &= \frac{p'_{y_j} - p'_{x_i}}{p'_{y_j} - p'_{x_i} + 2\delta}\\
    &= \left( \frac{p'_{y_j} - p'_{x_i} + 2\delta}{p'_{y_j} - p'_{x_i}}\right)^{-1}\\
    &= \left( 1 + \frac{2\delta}{p'_{y_j} - p'_{x_i}}\right)^{-1}\\ 
    &\le \left( 1 + \frac{2\delta}{\beta}\right)^{-1}\\
    &= \left( 1 + \omega\right)^{-1} = \frac{1}{1+\omega}
\end{align}
So, the truck delivers at most $1/(1+\omega)$ units to elevator $m_{z_k}$.
Since the elevator will pay $(1 + \omega)$ per unit and the delivery cost is $\omega$, the profit incurred is at most,
\begin{equation}
    \frac{1}{1+\omega} (1 + \omega) - \omega = 1 - \omega <= -1, \text{ since $\omega \ge 2$.}
\end{equation}
Thus, $m_{z_k}$ cannot profitably receive material with protein content $(p_{x_i} + p_{y_j})/2$ from any bins other than $b_{x_i}$ and $b_{y_j}$, which corresponds to an entry $(x_i, y_j, z_k)$ in $M$.

In addition to pairs of bins not being able to profitably supply material at a protein content other than their average content, more than two bins can never be put on the same truck, since the mixing costs of bins from the same $X$ or $Y$ set is infinite.
Therefore, every profitable elevator $m_{z_k}$ must have received material only from the two bins that formed an entry $(x_i, y_j, z_k)$ in $M$.
Let $M$ be the set of all $(x_i, y_j, z_k)$ from each elevator with a profit larger than zero. 
We now must argue that there are no repeated $x$, $y$, or $z$ values in the set $M$:
\begin{itemize}
\item The profitability of elevator $z$ is calculated as $((1+\omega)q_z-\omega \gamma)$, where $q_z$ is the amount of material elevator $z$ receives and $\gamma 
\in \mathbb{Z}_{\ge 0}$ is the number of trucks that delivered to it. 
Since each elevator can accept up to one unit total, $q_z \le 1$. 
Thus, the profitability of an elevator is at most $(1+\omega - \omega \gamma)$, which is less than zero for any value of $\gamma$ larger than one since $\omega$ is at least two.
So, a single elevator cannot profitably receive from multiple trucks (i.e. pairs of bins) and there will be no repeated $z$ values in $M$.

\item A elevator that has been delivered a full unit must have come from the full capacity of a single pair of bins.
This means that bins that deliver to an elevator receiving a full unit cannot be used to deliver to multiple elevators since they all have a capacity of $\frac{1}{2}$.
Thus, if all elevators are delivered a full unit, there are no repeated $x$ or $y$ values in the set $M$.
However, it is possible for elevators to be profitable without receiving a full unit, so \textit{submaximal elevators} (i.e., profitable elevators that receive less than one unit) can exist in valid $GM$ solutions.

Submaximal elevators are a problem since they can lead to elevators sharing bins, which would make their corresponding $x$ and $y$ values appear multiple times in $M$ and prevent it from being a matching.
We proceed by first showing that submaximal elevators must share a bin with at most one other submaximal elevator.
We then show that we can turn a $GM$ solution with submaximal elevators into a valid matching by selecting either of the submaximal elevators in each pair, and that doing so will only increase the profit of the solution and value of the matching.

If an elevator $z$ is submaximal, both of its bins did not deliver their full $\frac{1}{2}$ capacity to that elevator, since the protein content would be invalid if a single bin delivered its full capacity and the other did not.
If both bins of a submaximal elevator $z$ deliver less than their $\frac{1}{2}$ capacity, then the bins cannot have been used to deliver to two other profitable elevators $z'$ and $z''$.
To be profitable, elevator $z$ must receive more than $\frac{2}{3}$ of a unit, since $\omega \ge 2$ and $((1 + \omega) q_z-\omega)\le0$ for all $q_z\le \frac{2}{3}$.
So, the most capacity that can remain in each bin of a profitable submaximal elevator that drew evenly from each bin is $\frac{1}{6} - \epsilon$.
Thus, if $t'$ and $t''$ draw the maximum capacity from their non shared bins, their delivered capacities are $\frac{1}{2} + \frac{1}{6} - \epsilon < \frac{2}{3}$.
Therefore, if both bins of a submaximal elevator $t$ deliver less than their $\frac{1}{2}$ capacity, then at most one of those bins can be used by another profitable submaximal elevator.
If that bin was not delivered to another elevator, then $z$ can easily be made maximal by delivering all of its bins' capacity to it.
If that bin did deliver to another profitable elevator $z'$, that elevator must also be submaximal.
The profit of the solution can be increased by making either of the submaximal elevators maximal and the other zero (for a combined profit of one), since the combined profit of the two submaximal elevators sharing a bin is at most $\frac{1}{2}$:
\begin{align}
    \nonumber (1 + \omega)q_z-\omega+(1+\omega)q_{z'}-\omega &= (1 + \omega)(q_z + q_{z'}) - 2\omega\\
    \nonumber &\le (1+\omega)\left(\frac{3}{2}\right)-2\omega \text{, since $q_z + q_{z'}\le \frac{3}{2}$}\\
    \nonumber &= \frac{3}{2} - \frac{1}{2} \omega \le \frac{1}{2} \text{, since $\omega \ge 2$}
\end{align}

This means that since submaximal elevators share bins with at most one other submaximal elevator, submaximal elevators are uniquely paired together.
Since turning paired submaximal elevators into a single maximal elevator increases the profit, we can ensure that there are no repeated $x$ or $y$ values in $M$ with a profit $\ge \alpha$.
Therefore, $M$ is a matching of the $3$-DM instance with size $\ge \alpha$.

%Thus, a single bin cannot be profitably split between more than two trucks, or else the maximum profit of each of those $n$ trucks is $(3 * (\frac{1}{2} + \frac{1}{2n}) - 2)\le 0$ when $n > 2$.

\end{itemize}
\end{proof}
%\end{theorem}

\section{Conclusion}
In this paper, we investigated the computational complexity of the grain mixing problem. We presented two proofs following a reduction from the $3$-DM and a planar $3$-DM problem respectively to show that finding the optimal solutions for the grain mixing problem is NP-Complete. This paper extends our previous work (\cite{noor_21}) where we used evolutionary approaches to find a quality solution for the grain mixing problem in a feasible time.  The proofs suggest that brute-force methods to find the optimal solution may be infeasible and justify the use of approximation algorithms (such as evolutionary approaches) to find a quality solution. 

\vskip 0.2in
\bibliography{sample}

\begin{thebibliography}{9}
\providecommand{\natexlab}[1]{#1}
\providecommand{\url}[1]{\texttt{#1}}
\expandafter\ifx\csname urlstyle\endcsname\relax
  \providecommand{\doi}[1]{doi: #1}\else
  \providecommand{\doi}{doi: \begingroup \urlstyle{rm}\Url}\fi

\bibitem[Bilgen and Ozkarahan(2007)]{MILP_blending_shipping}
Bilge Bilgen and Irem Ozkarahan.
\newblock A mixed-integer linear programming model for bulk grain blending and
  shipping.
\newblock \emph{International Journal of Production Economics}, 107\penalty0
  (2):\penalty0 555 -- 571, 2007.

\bibitem[Dyer and Frieze(1986)]{3dm_npc}
M~E Dyer and A~M Frieze.
\newblock Planar 3dm is \textit{NP}-complete.
\newblock \emph{J. Algorithms}, 7\penalty0 (2):\penalty0 174–184, 1986.

\bibitem[Garey and Johnson(1990)]{complexity_90}
Michael~R. Garey and David~S. Johnson.
\newblock \emph{Computers and Intractability; A Guide to the Theory of
  NP-Completeness}.
\newblock W. H. Freeman and Co., 1990.

\bibitem[Haas(2011)]{blending_thesis}
Nikolas Haas.
\newblock \emph{Optimizing Wheat Blends for Customer Value Creation: A Special
  Case of Solvent Retention Capacity}.
\newblock MS Thesis, Kansas State University, USA, 2011.

\bibitem[Hayta and Cakmalki(2001)]{blending_2001}
Mehmet Hayta and Unsal Cakmalki.
\newblock Optimization of wheat blending to produce breadmaking flour.
\newblock \emph{Journal of Food Process Engineering}, 24:\penalty0 179 -- 192,
  08 2001.

\bibitem[Karp(1972)]{Karp1972}
Richard~M. Karp.
\newblock \emph{Reducibility among Combinatorial Problems}, pages 85--103.
\newblock Springer US, Boston, MA, 1972.

\bibitem[Krentel(1988)]{milp_hardness_88}
Mark~W. Krentel.
\newblock The complexity of optimization problems.
\newblock \emph{Journal of Computer and System Sciences}, 36\penalty0
  (3):\penalty0 490--509, 1988.

\bibitem[Li et~al.(2014)Li, Bonyadi, Michalewicz, and
  Barone]{blending_hybrid_evo}
Xiang Li, Mohammad~reza Bonyadi, Zbigniew Michalewicz, and Luigi Barone.
\newblock A hybrid evolutionary algorithm for wheat blending problem.
\newblock \emph{TheScientificWorldJournal}, 2014:\penalty0 967254, 02 2014.

\bibitem[Noor and Sheppard(2021)]{noor_21}
Md~Asaduzzaman Noor and John~W. Sheppard.
\newblock Evolutionary grain-mixing to improve profitability in farming winter
  wheat.
\newblock In \emph{Applications of Evolutionary Computation}, pages 113--129.
  Springer International Publishing, 2021.

\end{thebibliography}

\end{document}